\newtheorem{lemma}{Lemma}
\newtheorem{theorem}{Theorem}
\begin{document}

\preprint{APS/123-QED}

\title{Duality Quantum Computing with Subwave Projections}
\thanks{Department of Physics, Tsinghua University}%

\author{Fangjun Hu}
 \affiliation{State Key Laboratory of Low-Dimensional Quantum Physics and Department of Physics, Tsinghua University, Beijing 100084, China}
\author{Gui-Lu Long}%
 \email{gllong@tsinghua.edu.cn}
\affiliation{State Key Laboratory of Low-Dimensional Quantum Physics and Department of Physics, Tsinghua University, Beijing 100084, China}
\affiliation{Beijing National Research Center for Information Science and Technology, Beijing, 100084, China}
\affiliation{Collaborative Innovation Center of Quantum Matter, Beijing, 100084, China}
\affiliation{Beijing Academy of Quantum Information, Beijing, 100197, China}

\date{\today}

\pagestyle{empty}

\begin{abstract}
Duality quantum computing (DQC) offers the use of linear combination of unitaries (LCU), or generalized quantum gates, in designing quantum algorithms. DQC contains wave divider and wave combiner operations. The wave function of a quantum computer is split into several subwaves after the wave division operation. Then different unitary operations are performed on different subwaves in parallel. A quantum wave combiner combines the subwaves into a final wave function, so that a linear combination of the unitaries are performed on the final state. In this paper, we study of the properties of duality quantum computer with projections on subwaves. In subwave-projection DQC (SWP-DQC), we can realize the linear combinations of non-unitaries, and this not only gives further flexibility for designing quantum algorithms, but also offers additional speedup in the expected time complexity.  Specifically, SWP-DQC offers an $O(M)$ acceleration over  DQC with only final-wave-projection in the mean time complexity, where $M$ is the number of projections. As an application, we show that the ground state preparation algorithm recently proposed by Ge, Tura, and Cirac 
is actually an DQC algorithm, and we further  optimized the algorithm using SWP-DQC, which can save up to $\log_2 N$ qubits compared  DQC without subwave projection, where $N$ is the dimension of the system's Hilbert Space.
\end{abstract}

\maketitle


\section{Introduction}
\label{s1}

Duality quantum computing (DQC), which was put forward and developed in the last decade \cite{long512120general,gui2006general,gui2008duality,gui2009allowable}, offers the use of linear combination of unitaries (LCU) for quantum computing. Physically, DQC is a moving quantum computer passing through $M$-slits, which is called  the \textit{quantum wave divider} (QWD), where the wave function of the quantum computer is divided into different $M$ subwaves, and each subwave is processed differently.  The subwaves are recombined by a \textit{quantum wave combiner} (QWC) into a final wave function. The computing results are found by performing a measurement on the final state. The mathematical theory of DQC has been studied extensively  \cite{gudder2007mathematical,long2008mathematical,long2011duality,cao2010restricted,cao2013mathematical,zhang2010realization,chen2015generalized,chen2009note,zou2009mathematical,wu2014remarks}.
Gudder named linear combination of unitaries as {\it generalized quantum gates}, and he has shown that all bounded linear operators can be formed by the generalized quantum gates \cite{gudder2007mathematical}. Generalized quantum gate is also called LCU in Ref. \cite{childs2012hamiltonian}.

DQC is very useful in designing quantum algorithms. Previous quantum algorithms use only products of unitaries, for instance the quantum factorization algorithm \cite{shor1994algorithms} and the quantum search algorithm\cite{grover1996fast,long2001grover}. In DQC, the allowed quantum operations are extended to LCU, which offer more flexibility in constructing quantum algorithms.
It is interesting to note that though at first the coefficients in the generalized quantum gates (LCU) are positive numbers representing probabilities summing up to unity \cite{long512120general,gui2006general,gui2008duality,gudder2007mathematical}. It is soon found that in the most general form of DQC the coefficients in the LCU can have complex numbers  with a restriction that the sum of the modulus of the coefficients does not exceed unity \cite{gui2009allowable,cao2010restricted}.  Recently, people have witnessed a flood of works that use LCU with real and positive coefficients, to construct quantum algorithms
\cite{wan2007prime,harrow2009quantum,hao2010n,childs2012hamiltonian,berry2015simulating,wei2016duality1,wei2016duality2,wei2017realization,qiang2017quantum,qiang2018large,marshman2018passive,wei2018efficient}.
Notably, the HHL quantum algorithm for solving a set of linear equations \cite{harrow2009quantum} was shown to have used LCU \cite{wei2017realization}. DQC has also been used to construct secure remote quantum control by allowing different nodes performing different unitaries\cite{qiang2017quantum,qiang2018large}, and some passive quantum error correction code can be understood in terms of DQC \cite{marshman2018passive}. In these DQC algorithms, the coefficients are all real and positive. The application of DQC with general complex coefficients remains to be explored yet.

DQC also provides a realistic interpretation of quantum mechanics \cite{long2018realistic}, and in the description of processes of foundations of quantum mechanics, for instance,  the delayed-choice experiment\cite{roy2012nmr,xin2015realization,zhou2017quantum,long2018realistic,qin2018proposal,zhu2018beyond}, parity-time symmetric system \cite{zheng2013direct,zheng2018duality1,zheng2018duality2} and others \cite{huang2018pryce,liu2014realization,cui2012density}.

In Refs. \cite{long512120general,gui2006general} (toward the end of section 5), it is pointed out that the decomposition into subwaves can be iterated so that any subwave can be further decomposed into \textbf{sub-subwaves} to construct more complicated gates, such as linear combinations of non-unitaries.  Gudder showed that further divisions of sub-subwave \textit{cannot create new gate} \cite{gudder2007mathematical}.  In this work, we give Theorem \ref{t1} to explore the computability of DQC further in section \ref{s3}, which is a stronger conclusion and can imply Gudder's statement.

In this paper, we concentrate on another part of DQC, the projection of the wave function. In ordinary quantum computing, it is well-known that an measurement in the intermediate can always be postponed to the end of the calculation \cite{nielsen2002quantum}. However, it will make a difference in DQC because DQC is a probabilistic process. Usally,  only one projection measurement on final wave is performed in a DQC algorithm,  and a useful result can appear probabilistically.  We are able to apply projections on subwaves in the intermediate so as to give further flexibility and improvements in designing quantum algorithms. We find that further acceleration can be implemented by performing subwave-projections.

The paper is organized as follows. In section \ref{s2}, we briefly review DQC. In section \ref{s3}, we give a theorem that  proves the conclusion of Gudder on further divisions of subwaves in DQC. In section  \ref{s4}, we give the formalism of DQC with subwave-projections. In section \ref{s5}, we study the mathematical properties of SWP-DQC. In section \ref{s6}, we study some examples of SWP-DQC, namely, the ground state preparation algorithm proposed by Ge, Tura, and Cirac \cite{ge2017faster}. It is pointed out that the algorithm is a DQC algorithm, and we present an optimized version of the algorithm. A brief summary is given section \ref{s7}.

\section{Duality Quantum Computing}
\label{s2}

Duality quantum computing can create any linear combination of unitary operations. In this section, we give a brief introduction of DQC.  An $M$-slit  duality quantum computing can implement any linear combination of $M$ $(N\times N)$-unitary operations, where $N$ is the dimension of the Hilbert space of the quantum computer. $M$-slit can be implemented in an ordinary quantum computer with $m=\log_2 M$ qubits, or simply, a $d=M$  higher dimension qudit. Thus an $M$-slit duality quantum computer with $n$ qubits can be realized in an ordinary quantum computer with  $(m+n)$-qubit \cite{gui2008duality}.

In this article we only discuss \textit{sequential quantum circuit} realization of DQC. Namely, we only analyze the DQC circuits whose unitaries have to be operated one-by-one, instead of be manipulated parallelly. Here are the three main steps in DQC:

\textbf{QWD Step:} QWD is a unitary operation $V\otimes \mathbb{I}_n$, such that
\begin{equation}
(V\otimes \mathbb{I}_n)\ket{0}^{\otimes m} \ket{\Psi} = \sum_{i=0}^{2^m-1}  V_{i0} \ket{i} \ket{\Psi}.
\end{equation}
Here, the notation $\mathbb{I}_n$ represents the $2^n \times 2^n$ identity matrix on $n$ qubits.  The initial work qubit $\ket{\Psi}$, together with the auxiliary qubits $\ket{0}^{\otimes m}$, can be transformed to $\sum_{i=0}^{2^m-1}  V_{i0} \ket{i} \ket{\Psi}$ (which means that $V$ only acts on the auxiliary qubits), where $\sum_{i=0}^{M-1}|V_{i0}|^2=1$. The corresponding physics picture is that the spatial wave function is divided by $M$ slits, but the $n$ work qubits,  remain untouched.

\textbf{Parallel Operation Steps:} Then the unitary gates
\begin{eqnarray}
G_i &:=& \ket{i}\bra{i}\otimes U_i + \sum_{i'\neq i}^{M-1} \ket{i'}\bra{i'}\otimes \mathbb{I}_n\nonumber\\
&=& \ket{i}\bra{i}\otimes (U_i-\mathbb{I}_n)+\mathbb{I}_{m+n},
\end{eqnarray}
are performed on the subwaves.

$G_i$ is a $MN \times MN$ controlled-$U_i$ operations $(i=0,1,\dots,M-1)$, which maps a state with the form of $\ket{j}\ket{\psi}$, to
\begin{equation}
\resizebox{.9\hsize}{!}{$
G_i \ket{j}\ket{\psi}= \Bigg\{
\begin{array}{llc}
\ket{i}(U_i-\mathbb{I}_n)\ket{\psi}+\ket{i}\ket{\psi}=\ket{i}\bigl(U_i\ket{\psi}\bigl), & i=j, \\
\ket{j}\ket{\psi}, & i \neq j.
\end{array}$}
\end{equation}

The controlled gates in DQC are usually operated simultaneously, the quantum circuit here is sequential. After applying the set of controlled unitary operations $\left\{ G_0, G_1, \dots, G_{M-1} \right\}$ on the work qubits, namely the  subwaves, the following change is realized,
\begin{eqnarray}
\resizebox{.9\hsize}{!}{$G_{M-1} \dots G_1 G_0 \Bigg( \sum_{i=0}^{2^m-1}  V_{i0} \ket{i} \ket{\Psi} \Bigg) = \sum_{i=0}^{2^m-1} V_{i0} \ket{i} \big(U_i \ket{\Psi}\big)$}.
\end{eqnarray}

\textbf{QWC Step:}
QWC is also a unitary operation $W\otimes \mathbb{I}_n$ which only acts on auxiliary qubits, such that
\begin{equation}
(W\otimes \mathbb{I}_n) \ket{i} \ket{\psi} = \sum_{j=0}^{2^m-1} W_{ji} \ket{j} \ket{\psi}.
\end{equation}
Therefore, we have transformed the quantum system state to $\sum_{i=0}^{2^m-1} \sum_{j=0}^{2^m-1} V_{i0} W_{ji} \ket{j} \bigl(U_i \ket{\Psi}\bigr)$.
By measuring the $m$ auxiliary qubits, if the output is $\ket{0}^{\otimes m}$, then the work qubits state collapse to
$\sum_{i=0}^{2^m-1} V_{i0} W_{0i} \bigl(U_i \ket{\Psi}\bigr)$; otherwise, restart the algorithm. Before the projection on to $\ket{0}$ state of the auxiliary qubits, oblivious amplitude amplification can be performed  in order to increase the successful rate of the projection \cite{berry2015simulating}.

If the target work state is $\sum_{i=0}^{2^m-1} c_i U_i \ket{\Psi}$, the coefficient $c_i$ can be determined by choosing appropriate $W_{0i}$ and $V_{i0}$ (particularly, $W_{0i}=V_{i0}=\sqrt{c_i}$, if $c_i\ge 0$ and $\sum_i c_i=1$). An explicit construction of the QWD and QWC has been given by Zhang et al \cite{zhang2010realization}. It is worth noting that the expansion coefficients of LCU can be generally complex numbers \cite{gui2009allowable}. At present, duality quantum algorithms have used only LCU with real and positive coefficients.

The corresponding physics picture is that the $M$ subwaves are combined into $M$ subwaves. However, what we need is only the 0-th "channel" of the final $M$ channels on the right side. So we "\textbf{project out}" the 0-th channel by measuring the auxiliary qubits.

\section{Computability of DQC}
\label{s3}

To see the computability of DQC, we first give the following lemma \ref{l1} for preparation.


\begin{lemma}
For any $N \times N$ contracted Hermitian matrix $H$ ($\left\| H \right\| \le 1 $), the commutator $[H,(\mathbb{I}-H^2)^{1/2}]=0.$
\label{l1}
\end{lemma}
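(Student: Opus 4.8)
The plan is to use the spectral theorem and the fact that any matrix function of $H$ commutes with $H$. First I would diagonalize: since $H$ is Hermitian there exist a unitary $U$ and a real diagonal matrix $D = \mathrm{diag}(\lambda_1,\dots,\lambda_N)$ with $H = U D U^\dagger$. The contraction hypothesis $\|H\|\le 1$ says precisely that every eigenvalue satisfies $\lambda_k\in[-1,1]$, hence $1-\lambda_k^2\ge 0$ for all $k$.

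Next I would write $\mathbb{I}-H^2 = U(\mathbb{I}-D^2)U^\dagger$, which is positive semidefinite because each diagonal entry $1-\lambda_k^2$ is nonnegative. Its unique positive semidefinite square root is therefore
\begin{equation}
(\mathbb{I}-H^2)^{1/2} = U\,\mathrm{diag}\!\big(\sqrt{1-\lambda_1^2},\dots,\sqrt{1-\lambda_N^2}\big)\,U^\dagger .
\end{equation}
Thus $H$ and $(\mathbb{I}-H^2)^{1/2}$ are simultaneously diagonalized by the same $U$, and two matrices that are diagonal in a common orthonormal basis commute; multiplying the two displayed factorizations in either order gives $[H,(\mathbb{I}-H^2)^{1/2}]=0$. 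Equivalently, $(\mathbb{I}-H^2)^{1/2}=g(H)$ for the function $g(x)=\sqrt{1-x^2}$ on $[-1,1]$, and $g(H)$ commutes with $H$ by definition of the functional calculus.

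If one prefers to avoid invoking the spectral theorem directly, an alternative is the Weierstrass approximation theorem: $g(x)=\sqrt{1-x^2}$ is continuous on the compact interval $[-1,1]$ that contains the spectrum of $H$, so it is a uniform limit of polynomials $p_\ell$. Each $p_\ell(H)$ commutes with $H$, and $p_\ell(H)\to (\mathbb{I}-H^2)^{1/2}$ in operator norm, so the limit commutes with $H$ as well.

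I do not anticipate a genuine obstacle here; the only point needing a little care is justifying that the symbol $(\cdot)^{1/2}$ is unambiguous — i.e., that the positive semidefinite square root of $\mathbb{I}-H^2$ is exactly the matrix obtained by applying $x\mapsto\sqrt{1-x^2}$ in the eigenbasis of $H$ — and this follows from the positive-semidefiniteness established in the first step together with uniqueness of the positive semidefinite square root.
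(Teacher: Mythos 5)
Your proof is correct and follows essentially the same route as the paper's: diagonalize $H$ by the spectral theorem, identify $(\mathbb{I}-H^2)^{1/2}$ as the function $\sqrt{1-x^2}$ applied in the same eigenbasis, and read off the commutation. The extra care you take with uniqueness of the positive semidefinite square root (and the optional Weierstrass argument) only tightens what the paper leaves implicit.
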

\begin{proof}
Rewrite $H$ into $Q^{\dag}AQ$, where $Q$ is a unitary matrix and $A=diag(\lambda_1,\lambda_2,...,\lambda_{N})$ whose $\lambda \in [-1,1]$ for contraction. Then, $(\mathbb{I}-H^2)^{1/2} = Q^{\dag}diag(\sqrt{1-\lambda^2_1},\dots,\sqrt{1-\lambda^2_{N}})Q.$ Immediately, examine that $H(\mathbb{I}-H^2)^{1/2}= (\mathbb{I}-H^2)^{1/2}H=Q^{\dag}~diag(\lambda_1\sqrt{1-\lambda^2_1}, \lambda_2\sqrt{1-\lambda^2_2}, \dots, \lambda_{N}\sqrt{1-\lambda^2_{N}})~Q$.
\end{proof}

Then the computability of DQC is ensured by our following Theorem \ref{t1}, which can be treated as an improvement of Ref. \cite{wu1994additive} (also, the Lemma 2.5 in Ref. \cite{wang2008note} gave a weaker version of Theorem \ref{t1}):

\begin{theorem}
For any $N \times N$ contraction matrix $A$ ($\left\| A \right\| \le 1 $), $A$ can always be decomposed into only two unitary operations averagely, i.e. $A = \frac{1}{2}U_0 + \frac{1}{2}U_1$.
\label{t1}
\end{theorem}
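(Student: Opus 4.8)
The plan is to reduce the general contraction to a positive-semidefinite one by a polar (singular-value) decomposition, and then to write that positive-semidefinite factor as the average of two unitaries using the square-root trick underlying Lemma~\ref{l1}.

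First I would write $A = WP$, where $P := (A^{\dag}A)^{1/2}$ is positive semidefinite and $W$ is unitary. This is just the polar decomposition; concretely, from a singular-value decomposition $A = \tilde{U}\Sigma\tilde{V}^{\dag}$ one takes $P = \tilde{V}\Sigma\tilde{V}^{\dag}$ and $W = \tilde{U}\tilde{V}^{\dag}$, which is a genuine unitary even when $A$ is singular. Since $\left\|P\right\| = \left\|A\right\| \le 1$, $P$ is a Hermitian contraction, so $\mathbb{I}-P^2$ is positive semidefinite and $(\mathbb{I}-P^2)^{1/2}$ is a well-defined Hermitian matrix.

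Next I would set $U_0 := W\bigl(P + i(\mathbb{I}-P^2)^{1/2}\bigr)$ and $U_1 := W\bigl(P - i(\mathbb{I}-P^2)^{1/2}\bigr)$. By Lemma~\ref{l1}, $P$ commutes with $(\mathbb{I}-P^2)^{1/2}$, so $\bigl(P + i(\mathbb{I}-P^2)^{1/2}\bigr)\bigl(P - i(\mathbb{I}-P^2)^{1/2}\bigr) = P^2 + (\mathbb{I}-P^2) = \mathbb{I}$; since the two factors are Hermitian conjugates of each other, each is unitary, hence $U_0$ and $U_1$ are unitary as products of unitaries. Then $\frac{1}{2} U_0 + \frac{1}{2} U_1 = W P = A$, which is exactly the claimed decomposition.

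The step I expect to need the most care is the reduction $A \mapsto P$: one must ensure that the factor $W$ in the polar decomposition is a full unitary and not merely a partial isometry, which is precisely where a singular $A$ could be an obstacle; the explicit SVD construction above settles this uniformly. The remaining verifications are short once Lemma~\ref{l1} supplies the commutativity that makes $P \pm i(\mathbb{I}-P^2)^{1/2}$ unitary.
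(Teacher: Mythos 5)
Your proposal is correct and follows essentially the same route as the paper: polar decomposition $A=WP$, then $U_{0,1}=W\bigl(P\pm i(\mathbb{I}-P^2)^{1/2}\bigr)$, with Lemma~\ref{l1} supplying the commutativity needed for unitarity. The only addition is your explicit SVD argument that $W$ can be taken unitary even for singular $A$, which is a worthwhile clarification but not a different proof.
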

\begin{proof}
Consider the polar decomposition of $A=UP$, where $U$ is a unitary matrix and $P$ is a positive-semidefinite Hermitian, such that $P=(A^\dag A)^{1/2}$. $P$ is also a contraction. Define
\begin{equation}
U_0 = U \bigl(P+i(\mathbb{I}-P^2)^{1/2}\bigr),~~~~U_1 = U \bigl(P-i(\mathbb{I}-P^2)^{1/2}\bigr).
\end{equation}
Notice that
\begin{eqnarray}
\big[P+i(\mathbb{I}-P^2)^{1/2}\big]\big[P-i(\mathbb{I}-P^2)^{1/2}\big] &=& \mathbb{I}, \\
\big[P-i(\mathbb{I}-P^2)^{1/2}\big]\big[P+i(\mathbb{I}-P^2)^{1/2}\big] &=& \mathbb{I},
\end{eqnarray}
because $[P,(\mathbb{I}-P^2)^{1/2}]=0$ by Lemma \ref{l1}. On the other hand, $P \pm i(\mathbb{I}-P^2)^{1/2}$ are unitaries, since $\big[P-i(\mathbb{I}-P^2)^{1/2}\big]^\dag=\big[P+i(\mathbb{I}-P^2)^{1/2}\big]$. Obviously, $A = \frac{1}{2}U_0 + \frac{1}{2}U_1$.
\end{proof}

It is a stronger conclusion. As what we have shown, every finite-dimension contraction matrix can always be decomposed into \textbf{only two unitary operations averagely}, as given in Ref. \cite{gui2008duality}. In practice, whether we can construct a linear decomposition of a given non-unitary gate in terms of linear combination of unitaries is not a problem, what we really concern is how to create a linear decomposition of a given non-unitary gate using certain gates group, e.g. the direct product of Pauli matrices.

\section{Duality Quantum Computation with Subwave-Projection }			
\label{s4}

It was already pointed out that DQC can also realize linear combinations of non-unitaries(LCNU) by using wave divider in the subwaves \cite{gui2006general}. Using LCNU cannot create new gates \cite{gudder2007mathematical}, but it provides more flexibility in designing quantum algorithms.  The result of the DQC is obtained by first  projecting out the auxiliary qubits into $\ket{0}$ states. Here we study the case where we perform projection of the subwaves, namely DQC with subwave projections, SWP-DQC. It gives another way to construct linear combination of non-unitary operations. Considering  the initial $n$-qubit quantum state $\ket{\Psi}$. We divide the auxiliary qubits into two groups in SWP-DQC: the \textit{first group of $m$ auxiliary qubits} initially in $\ket{0}^{\otimes m}$ and the \textit{second group of $p$ auxiliary qubits} initially in $\ket{0}^{\otimes p}$.

\begin{figure*}[htb]
\centering
\mbox{
\Qcircuit @C=1.em @R=0.7em {
& \lstick{\ket{0}} &\qw & \multigate{3}{V} & \ctrlo{0} \qw & \qw & \ctrlo{0} \qw &\qw & \dots & & \ctrl{0} \qw & \multigate{3}{W} & \measureD{\ket{0}}\\
& \lstick{\vdots~} &\qw{\backslash}^{m-3} & \ghost{V} & \qw{\vdots} & \qw & \qw{\vdots} & \qw & {\dots} & & \qw{\vdots} & \ghost{W} & \qw \vdots \\
& \lstick{\ket{0}} &\qw& \ghost{V} & \ctrlo{1} \qw & \qw & \ctrlo{1} \qw &\qw & \dots & & \ctrl{1} \qw & \ghost{W} & \measureD{\ket{0}}\\
& \lstick{\ket{0}} & \qw & \ghost{V}  & \ctrlo{1} & \qw \qw & \ctrl{1} \qw &\qw & \dots & & \ctrl{1} \qw & \ghost{W} & \measureD{\ket{0}}
	\inputgroupv{1}{4}{2.0em}{2.5em}{m~qubits~~~~~~~~~~~~~~~~~~~~~} \\
& \lstick{p~qubits~\ket{0}^{\otimes p}~~~} &\qw{\backslash}^{p} & \qw & \multigate{4}{U_0} & \measureD{\ket{0}^{\otimes p}} & \multigate{4}{U_1} & \measureD{\ket{0}^{\otimes p}} & \dots & & \multigate{4}{U_{M-1}} & \qw & \measureD{\ket{0}^{\otimes p}}\\
&  &\qw & \qw & \ghost{U_0} & \qw & \ghost{U_1}  &\qw & \dots & &\ghost{U_{M-1}} & \qw\\
&  &\qw{\backslash}^{n-3} & \qw & \ghost{U_0} & \qw & \ghost{U_1}  &\qw & \dots & &\ghost{U_{M-1}} & \qw & ~~~\sum_{i=0}^{M-1} c_i B_i \ket{\Psi} \\
&  &\qw & \qw & \ghost{U_0} & \qw & \ghost{U_1} &\qw & \dots & &\ghost{U_{M-1}} & \qw & ~~~ = A \ket{\Psi}\\
&  &\qw & \qw & \ghost{U_0} & \qw & \ghost{U_1} &\qw & \dots & &\ghost{U_{M-1}} & \qw
	\inputgroupv{6}{9}{2.0em}{2.5em}{n~qubits~\ket{\Psi}~~~~~~~~~~~~~~~~~} }
}
\caption{Illustration of quantum circuit of SWP-DQC. The first group of $m$ auxiliary qubits and the second group  $p$ auxiliary qubits are in the $\ket{0}^m$ and $\ket{0}^p$ states initially.   $\ket{\chi}$ in each D-shaped measurement box means obtaining qubit $\ket{\chi}$ after measuring. In SWP-DQC, all subwave-projections are onto the $\ket{0}^p$ state. On the righthand, one obtains $\sum_i c_i B_j$, where $B_j$ is a non-unitary operator.}\label{f1}
\end{figure*}
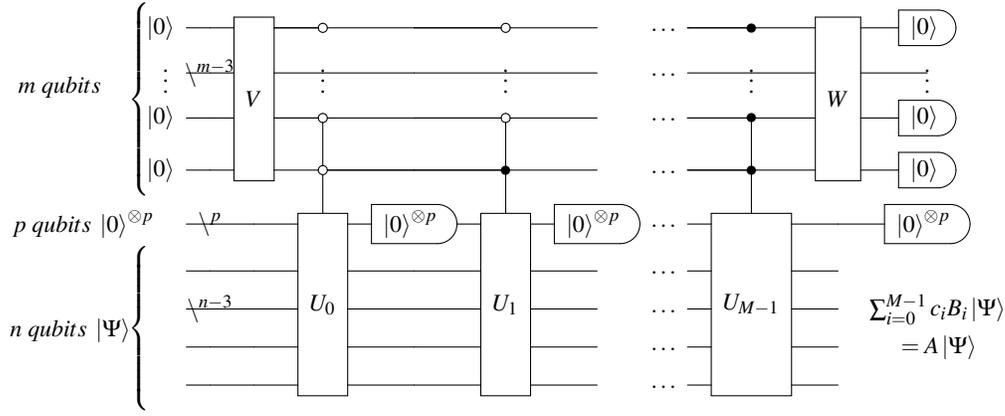

In SWP-DQC,  quantum state $\ket{0}^{\otimes (m+p)}\ket{\Psi}$ is transformed to $\sum_{i=0}^{2^m-1}  V_{i0} \ket{i} \ket{0}^{\otimes p} \ket{\Psi}$ by the QWD operation $V\otimes \mathbb{I}_{p+n}$. Instead of constructing a linear combination of $M$ unitary operations $U_0,U_1,...,U_{M-1}$ on the work qubits of $M$ subwaves $V_{i0}\ket{\Psi}$, we wish to construct a linear combination of several  non-unitary operations $B_0,B_1,...,B_{M-1}$ on the work qubits,
\begin{equation}
\label{e9}
A \ket{\Psi} = \sum_{i=0}^{M-1} c_i B_i \ket{\Psi}.
\end{equation}
where $A$ is a contraction. Precisely, consider an $MPN \times MPN$ controlled-$U_i$ operation $G_i := \ket{i}\bra{i}\otimes (U_i-\mathbb{I}_{p+n})+\mathbb{I}_{m+p+n}$, where $U_i$ is a $PN \times PN$ unitary  matrix, which can implement the effect of non-unitary operation $B_i$ on arbitrary state $\ket{0}^{\otimes p} \ket{\psi}$ by:
\begin{equation}
U_i \ket{0}^{\otimes p} \ket{\psi} = \ket{0}^{\otimes p}\bigl(B_i \ket{\psi}\bigr)+\sum_{k=1}^{2^p-1} \ket{k}\ket{\psi_{ik}}.
\end{equation}

In SWP-DQC algorithm, we only need to retain the partial terms $\ket{0}^{\otimes p}\bigl(B_i \ket{\Psi}\bigr)$ of the operated subwave by projection. More precisely, here we describe the algorithm in details.

\textbf{Basic Step:} As the $0$-th step, performing the controlled-$U_0$ unitary operation $G_0$ on $\sum_{i=0}^{2^m-1} V_{i0} \ket{i} \ket{0}^{\otimes p} \ket{\Psi}$, yields the state
\begin{equation}		
\resizebox{1.\hsize}{!}{$V_{00}\ket{0}^{\otimes m+p}B_0 \ket{\Psi}+V_{00}\ket{0}^{\otimes m}\sum_{k=1}^{2^p-1} \ket{k}\ket{\Psi_{0k}}+\sum_{i=1}^{2^m-1} V_{i0} \ket{i} \ket{0}^{\otimes p} \ket{\Psi}$}.
\end{equation}
Measure the second auxiliary qubits, the probability of reading out $\ket{0}^{\otimes p}$ is
\begin{eqnarray}
\label{e12}
p_0 &=& \frac{b_0 |V_{00}|^2 + |V_{10}|^2 + \dots + |V_{M-1,0}|^2}{|V_{00}|^2 + |V_{10}|^2 + \dots + |V_{M-1,0}|^2} \nonumber \\
&=& b_0 |V_{00}|^2 + |V_{10}|^2 + \dots + |V_{M-1,0}|^2,
\end{eqnarray}
where we difine $p_i$ the success probability of the $i$-th step, and we also define the coefficient
\begin{equation}
b_i=\bra{\Psi}B_i^\dag B_i\ket{\Psi}.
\end{equation}
If the output is $\ket{0}^{\otimes p}$, then we obtain the intermediate quantum state
\begin{equation}		
V_{00}\ket{0}^{\otimes m} \ket{0}^{\otimes p}\bigl(B_0 \ket{\Psi}\bigr)+\sum_{i=l}^{2^m-1}  V_{i0} \ket{i} \ket{0}^{\otimes p} \ket{\Psi},
\end{equation}
and continue.  Otherwise, restart the algorithm again. Assume the 0-th step costs a run time of $t_0$.

\textbf{Induction Steps:} If the previous $l$-th steps are successful, which means that we have obtained the quantum state
\begin{equation}
\sum_{i=0}^{l-1} V_{i0}\ket{i} \ket{0}^{\otimes p}\bigl(B_i \ket{\Psi}\bigr)+\sum_{i=l}^{2^q-1}  V_{i0} \ket{i} \ket{0}^{\otimes p} \ket{\Psi}.
\end{equation}
Then perform the controlled-$U_l$ unitary operation $G_l$ on the system, yielding
\begin{eqnarray}		
& & \sum_{i=0}^{l-1} V_{i0}\ket{i} \ket{0}^{\otimes p}\bigl(B_i \ket{\Psi}\bigr)+V_{l0}\ket{l} \ket{0}^{\otimes p}\bigl(B_l \ket{\Psi}\bigr) \nonumber \\
&+&  V_{l0}\ket{l}\sum_{k=1}^{2^p-1} \ket{k}\ket{\Psi_{lk}}+\sum_{i=l+1}^{2^m-1}  V_{i0} \ket{i} \ket{0}^{\otimes p} \ket{\Psi}.
\end{eqnarray}
Measure the second auxiliary qubits, the probability of obtaining $\ket{0}^{\otimes p}$ is
\begin{equation}
\label{e17}
p_l = \frac{b_0 |V_{00}|^2 + \dots + b_{l-1} |V_{l-1,0}|^2 + b_{l} |V_{l,0}|^2 + \dots + |V_{M-1,0}|^2}{b_0 |V_{00}|^2 + \dots + b_{l-1} |V_{l-1,0}|^2 + |V_{l,0}|^2 + \dots + |V_{M-1,0}|^2}.
\end{equation}

If the output is $\ket{0}^{\otimes p}$, then we obtain an intermediate quantum state
\begin{equation}
\sum_{i=0}^{l} V_{i0}\ket{i} \ket{0}^{\otimes p}\bigl(B_i \ket{\Psi}\bigr)+\sum_{i=l+1}^{2^m-1}  V_{i0} \ket{i} \ket{0}^{\otimes p} \ket{\Psi}
\end{equation}
and continue. Otherwise, restart the algorithm. Assume the $l$-th step uses a run time of $t_l$.

\textbf{Final Step:} If all the $M=2^m$ steps succeed, we have done $M$ times projection measurement and obtained $\sum_{i=0}^{M-1} V_{i0}\ket{i} \ket{0}^{\otimes p}\bigl(B_i \ket{\Psi}\bigr)$. Perform the QWC operation on these $M$ subwaves, we obtain
\begin{eqnarray}		%
& & \sum_{i=0}^{2^m-1}\sum_{j=0}^{2^m-1} W_{ji}V_{i0}\ket{j}\ket{0}^{\otimes p}\bigl(B_i \ket{\Psi}\bigr) \nonumber\\
&=& \ket{0}^{\otimes (m+p)} \sum_{i=0}^{2^m-1} W_{0i}V_{i0} B_i \ket{\Psi} \nonumber\\
&+& \sum_{i=0}^{2^m-1}\sum_{j=1}^{2^m-1} W_{ji}V_{i0}\ket{j}\ket{0}^{\otimes p}\bigl(B_i \ket{\Psi}\bigr).
\end{eqnarray}
Measure the first auxiliary qubits, if the output is $\ket{0}^{\otimes m}$, then the $n$-qubit quantum state has been transformed to the target state $\sum_{i=0}^{M-1} c_i B_i \ket{\Psi} = A \ket{\Psi}$. The coefficient $c_i$ can be determined by choosing appropriate $W_{0i}$ and $V_{i0}$. In this work, we restrict ourselves to real and positive $c_i \ge 0$, and this implies $W_{0i}=V_{i0}=\sqrt{c_i}$. The success probability of the final step is
\begin{equation}
\label{e20}
p_M = \frac{\bra{\Psi} A^\dag A \ket{\Psi}}{b_0 |V_{00}|^2 + \dots + b_{M-1} |V_{M-1,0}|^2}.
\end{equation}
The quantum circuit of SWP-DQC is illustrated in Figure \ref{f1}. The corresponding conceptual physics picture of a SWP-DQC device is shown in Figure \ref{f2}.

\begin{figure}[htb]
\begin{center}
\includegraphics[width=0.5\textwidth]{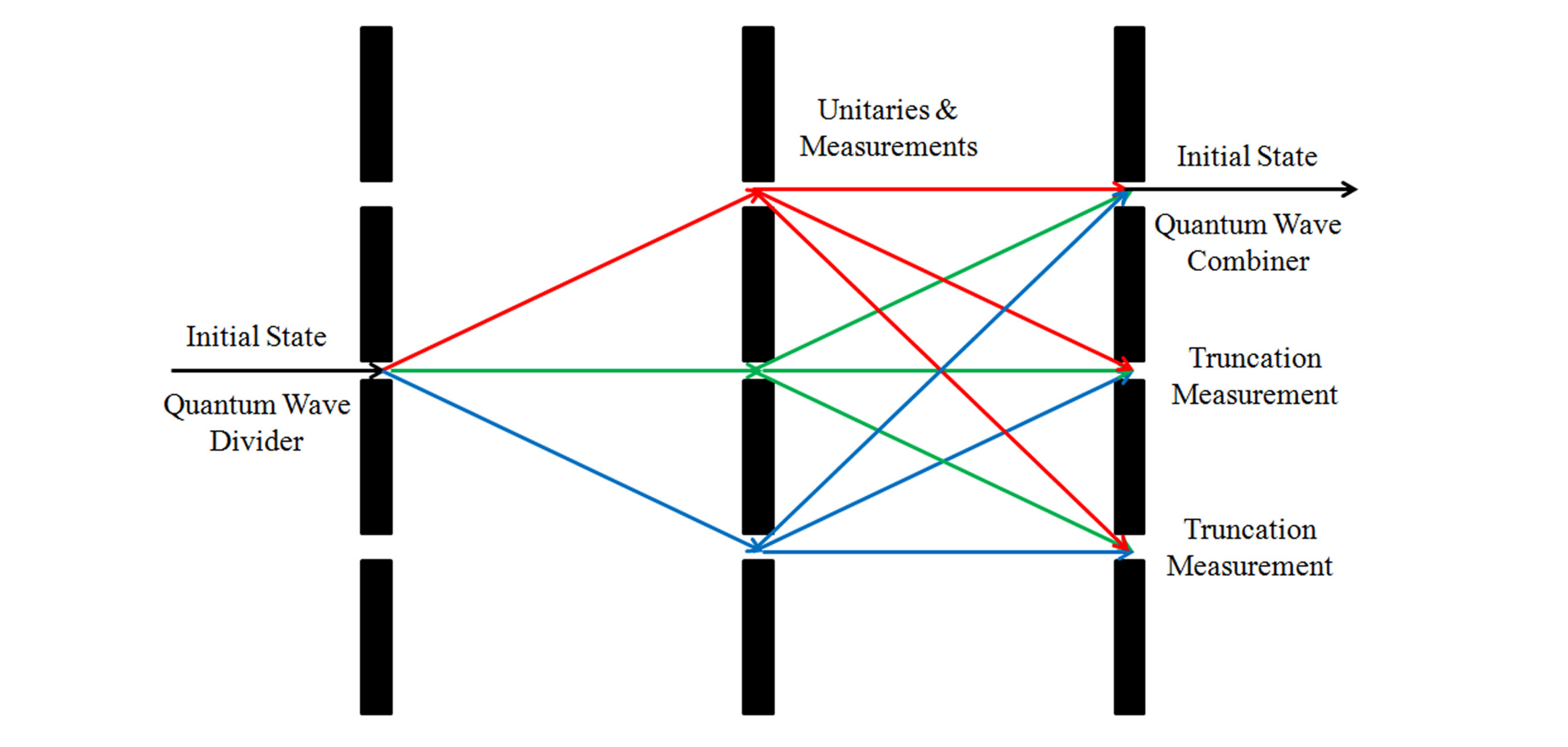}
\caption{The conceptual physical picture of a 3-slits SWP-DQC. In the middle, the subwaves are manipulated and some auxiliary qubits are projected on to $\ket{0}$ states. }\label{f2}
\end{center}
\end{figure}

\section{Some Mathematical Results of SWP-DQC}
\label{s5}

Here we focus on two mathematical results of SWP-DQC: the success probability and the mean time complexity. We show that SWP-DQC can give a polynomial acceleration compared to DQC with only a final projection.

\subsection{Success Probability}
Our first mathematical result can be derived by calculating the success probability of the entire algorithm from Eq. (\ref{e12}), Eq. (\ref{e17}) and Eq. (\ref{e20}),
\begin{eqnarray}
P &=& p_0 \times p_1 \times \dots \times p_{M-1} \times p_M \nonumber \\
&=& \big( b_0 |V_{00}|^2 + |V_{10}|^2 + \dots + |V_{M-1,0}|^2 \big) \nonumber \\
&\times& \frac{b_0 |V_{00}|^2 + b_{1} |V_{10}|^2 + \dots + |V_{M-1,0}|^2}{v_0 |V_{00}|^2 + |V_{10}|^2 + \dots + |V_{M-1,0}|^2} \times \dots \nonumber \\
&\times& \frac{b_0 |V_{00}|^2 + \dots +  b_{M-2} |V_{M-2,0}|^2 + b_{M-1} |V_{M-1,0}|^2}{b_0 |V_{00}|^2 + \dots + b_{M-2} |V_{M-2,0}|^2 + |V_{M-1,0}|^2} \nonumber \\
&\times& \frac{\bra{\Psi} A^\dag A \ket{\Psi}}{b_0 |V_{00}|^2 + \dots + b_{M-1} |V_{M-1,0}|^2} \nonumber\\
&=& \bra{\Psi} A^\dag A \ket{\Psi}.
\end{eqnarray}
The expression of the probability is natural because it is just the norm of the vector $A\ket{\Psi}$, and is less than $1$ because we have assumed that $A$ is a contraction. In DQC with final-wave-projection, the target state is also $\sum_{i=0}^{M-1} c_i B_i \ket{\Psi} = A \ket{\Psi}$. Therefore, as a probabilistic algorithm, its success probability must also be the square of modulus,
\begin{equation}
P'= \bra{\Psi} A^\dag A \ket{\Psi}.
\end{equation}

\subsection{Mean Time Complexity and SWP-DQC Acceleration}
In order to analyze the mean time complexity, we make the following analysis on a model in probability theory. Suppose an event occurs with probability of $p$ in an experiment. It will be terminated once the experiment is successful. If it fails, then we make another experiment. We stop until we succeed to get the event. The probability of success after $s$-time is defined as $P(s) = p(1-p)^{s-1}$. The mean number of experiments will be
\begin{equation}
E = \sum_{s=1}^\infty s \cdot p(1-p)^{s-1} = \lim_{i \to \infty} \frac{1-(1-p)^s(1+s \cdot p)}{p} = \frac{1}{p}.
\end{equation}
The expected value of run time of SWP-DQC can be expressed by the known physical quantity from this model in probability theory. In SWP-DQC, there are several projections on the subwaves, each projection succeeds with a probability $p_i$. Because the call of the $i$-th step in the algorithm is equivalent to an event with success probability of $p_i p_{i+1}\dots p_M$ in a series of experiments. Therefore, the overall mean time of SWP-DQC is the sum of mean time of all steps:
\begin{eqnarray}
\label{e24}
Et &=& \frac{t_0}{p_0 p_1 \dots p_M} + \frac{t_1}{p_1 \dots p_M} + \dots + \frac{t_{M-1}}{p_{M-1} p_M} \\ \nonumber
&=& \frac{t_0 + p_0t_1 + (p_0p_1)t_2 + \dots + (p_0p_1\dots p_{M-2})t_{M-1}}{p_0  p_1 \dots p_{M-1} p_M}.
\end{eqnarray}
Notice that the expectation depends on the order of gates. The optimal order requires complicated \textit{numerical calculation}.

However, the total time of DQC with final-wave-projection is obviously $t_0 + t_1 + t_2 + \dots + t_{M-1}$, with the success probability of $P'= \bra{\Psi} A^\dag A \ket{\Psi}$. Thus, the expected value of run time in DQC with final-wave-projection can be derived of our model in probability theory:
\begin{eqnarray}
Et' &=& \frac{t_0 + t_1 + t_2 + \dots + t_{M-1}}{\bra{\Psi} A^\dag A \ket{\Psi}} = \frac{t_0 + t_1 + t_2 + \dots + t_{M-1}}{p_0  p_1 \dots p_{M-1} p_M}\\
&>& \frac{t_0 + p_0t_1 + (p_0p_1)t_2 + \dots + (p_0p_1\dots p_{M-2})t_{M-1}}{p_0  p_1 \dots p_{M-1} p_M} =  Et. \nonumber
\end{eqnarray}

Even though the numerical calculation in Eq. (\ref{e24}) is very troublesome, we can still compare the complexity between $Et$ and $Et'$ with some simplified assumptions. Assume that each basic gate has the same time complexity $t_0=t_1= \dots =t_{M-1}=1$, and the same order of success probability of $p$ (recall the Eq. (\ref{e17}), in each step $p_i$ is very close to 1 if $M$ is large enough). Then we can derive that $Et' = O(\frac{M}{p^M})$, whereas
\begin{equation}
Et= \frac{1+p+\dots+p^{M-1}}{p^{M+1}} = \frac{1-p^M}{p^M p(1-p)} = O\left(\frac{1}{p^M}\right).
\end{equation}
Our analysis on the acceleration is only valid in  \textit{sequential} realization of SWP-DQC. It is interesting to study the parallel realization of SWP-DQC, and study its acceleration.

Another point is that, in most cases, $M$ increases rapidly as precision of calculation gets higher. For example, in order to get a higher precision, the larger the evolution time $t$ in a quantum algorithm, the more steps we need to use, which means that $p$ may still be the same, but the complexity of the algorithm increases with the matrix number $M$. In this respect, we say,  SWP-DQC has an $O(M)$ speedup compared with DQC with final-wave-projection in time complexity.

\section{Application: An optimization of ground state preparation quantum algorithm}			
\label{s6}

Yimin Ge, Jordi Tura, and J. Ignacio Cirac recently proposed a general-purpose quantum algorithm for preparing ground states of a quantum Hamiltonian from a given trial state (we will use GTC algorithm hereafter). Here we show that the GTC algorithm is a DQC algorithm, and we also give an optimization of GTC algorithm by using the SWP-DQC. The optimized algorithm uses $2+\log_2 N$ less qubits, where $N$ is the dimension of the Hermitian matrix.

\subsection{Brief Description of GTC Algorithm}
\label{ss1}

Here is a brief description of GTC algorithm for ground state preparation \cite{ge2017faster}. For the $N \times N$ Hermitian matrix $\tilde H$, assume $n=\log_2 N$, and the spectrum of $\tilde H$ lies in $[0,1]$, with the lowest eigenvalue $\lambda_0$. The spectrum is assumed to be  non-degenerate. Let $E \in [0,\lambda_0]$ be a known real number, then define $\delta_E:=\lambda_0 -E$ and $H:=(1+E)~\mathbb{I}_n-\tilde H$. Namely, $H$'s spectrum lies in $[E,1-\delta_E]$. And also assume that all other eigenvalues of $H$ are $\le 1- \delta_E -\Delta$. The core ideal of this algorithm is that the iteration of $H$ is almost a projector onto the ground state, because the power of $1-\delta_E$ is far larger than the power of the other eigenvalues. Ge et al showed that by iteratively performing $H$ on the trial state $\ket{\phi}=\phi_0 \ket{\lambda_0} + \ket{\lambda^\bot_0}$ for $M_0$ times, where
\begin{eqnarray}
M_0=O(\frac{1}{\Delta}\log \frac{1}{\left| \phi_0 \right| \epsilon}),
\end{eqnarray}
which is an even integer, the norm of the difference between normalized state $H^{M_0}\ket{\phi}$ and $\ket{\lambda_0}$ will be less than $\epsilon$.

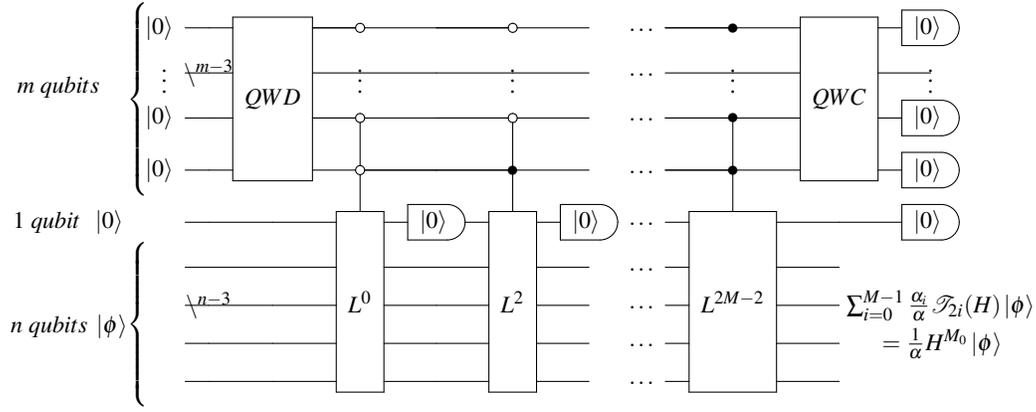
\begin{figure*}[htb]
\centering
\mbox{
\Qcircuit @C=1.em @R=0.7em {
& \lstick{\ket{0}} &\qw & \multigate{3}{QWD} & \ctrlo{0} \qw & \qw & \ctrlo{0} \qw &\qw & \dots & & \ctrl{0} \qw & \multigate{3}{QWC} & \measureD{\ket{0}}\\
& \lstick{\vdots~} &\qw{\backslash}^{m-3} & \ghost{QWD} & \qw{\vdots} & \qw & \qw{\vdots} & \qw & {\dots} & & \qw{\vdots} & \ghost{QWC} & \qw \vdots \\
& \lstick{\ket{0}} &\qw& \ghost{QWD} & \ctrlo{1} \qw & \qw & \ctrlo{1} \qw &\qw & \dots & & \ctrl{1} \qw & \ghost{QWC} & \measureD{\ket{0}}\\
& \lstick{\ket{0}} & \qw & \ghost{QWD}  & \ctrlo{1} & \qw \qw & \ctrl{1} \qw &\qw & \dots & & \ctrl{1} \qw & \ghost{QWC} & \measureD{\ket{0}}
	\inputgroupv{1}{4}{2.0em}{2.5em}{m~qubits~~~~~~~~~~~~~~~~~~~~~} \\
& \lstick{1~qubit~~\ket{0}~~~~~~~~} &\qw & \qw & \multigate{4}{L^0} & \measureD{\ket{0}} & \multigate{4}{L^2} & \measureD{\ket{0}} & \dots & & \multigate{4}{L^{2M-2}} & \qw & \measureD{\ket{0}}\\
&  &\qw & \qw & \ghost{L^0} & \qw & \ghost{L^2}  &\qw & \dots & &\ghost{L^{2M-2}} & \qw\\
&  &\qw{\backslash}^{n-3} & \qw & \ghost{L^0} & \qw & \ghost{L^2}  &\qw & \dots & &\ghost{L^{2M-2}} & \qw & ~~~\sum_{i=0}^{M-1} \frac{\alpha_i}{\alpha} \mathcal{T}_{2i}(H) \ket{\phi}\\
&  &\qw & \qw & \ghost{L^0} & \qw & \ghost{L^2} &\qw & \dots & &\ghost{L^{2M-2}} & \qw & ~~~ = \frac{1}{\alpha} H^{M_0}\ket{\phi}\\
&  &\qw & \qw & \ghost{L^0} & \qw & \ghost{L^2} &\qw & \dots & &\ghost{L^{2M-2}} & \qw
	\inputgroupv{6}{9}{2.0em}{2.5em}{n~qubits~\ket{\phi}~~~~~~~~~~~~~~~~~} }
}
\caption{SWP-DQC for optimized GTC Algorithm, where the first column and row of QWD and QWC is $V_{i0} = W_{0i} = \sqrt{\frac{\alpha_i}{\alpha}}$ and $L$ is defined in Eq. (\ref{t2}). We only need one qubit in the second group of auxiliary qubit.}
\label{f3}
\end{figure*}

The $M_0$ power of $H$ can be calculated using a linear combination of $M$ terms of non-unitary operations to a good approximation, where
\begin{equation}
\label{e27}
M=O(\sqrt{\frac{1}{\Delta}}\log \frac{1}{\left| \phi_0 \right| \epsilon}),
\end{equation}
namely,
\begin{eqnarray}
H^{M_0}=\sum_{i=0}^{M-1} \alpha_{i} \mathcal{T}_{2i}(H)+O(\left| \phi_0 \right| \epsilon),
\end{eqnarray}
where $\mathcal{T}_{2i} (x)$ represents the $2i$-th Chebyshev polynomials of the first kind, $\alpha_i$ is defined as $2^{1-2m_0} \bigl( \frac{1}{2} \bigr)^{\delta_{i0}} \binom{2m_0}{m_0+i}$, $m_0$ is $M_0/2$, and $\delta_{i0}$ is the Kronecker delta.

The operation $\sum_{i=0}^{M-1} \alpha_{i} \mathcal{T}_{2i}(H)$ is the linear combination of $M$ non-unitary operations $\mathcal{T}_{2i}(H)$, where $M$ is defined in Eq. (\ref{e27}). It can be obtained by using the following matrix $L$ and its powers
\begin{equation}
L = \left(
\begin{array}{ccc}
H & -(\mathbb{I}-H^2)^{1/2} \\
(\mathbb{I}-H^2)^{1/2} & H
\end{array}
\right),\label{t2}
\end{equation}
and
\begin{equation}
L^n = \left(
\begin{array}{ccc}
\mathcal{T}_n(H) & -(\mathbb{I}-H^2)^{1/2}\mathcal{U}_{n-1}(H) \\
(\mathbb{I}-H^2)^{1/2}\mathcal{U}_{n-1}(H) & \mathcal{T}_n(H)
\end{array}
\right).\label{t2p}
\end{equation}

\subsection {Quantum Circuit for Optimization of GTC Algorithm}
\label{ss3}

The matrix $L$ appearing in Eq. (\ref{t2}) was applied by Ge et al to construct a quantum walk in a larger Hilbert space, namely they doubled the entire system and treated $L$ as an operator on the Hilbert space of $\mathbb{C}^{2N}\otimes\mathbb{C}^{2N}$, which has a dimension of $4N^2$. This costs too much qubit resource to complete the task, which adds in $1+\log_2 (4N^2/N) = 3 + \log_2 N$ more auxiliary qubits. The GTC ground state preparation algorithm developed requires $2\log_2 N +\log_2 M+3$ qubits.

Here we give an optimal algorithm which only uses $\log_2 N +\log_2 M+1$ qubits, that is,  $2+\log_2 N$ less qubits, by using SWP-DQC. In our optimized algorithm, it is applied in the Hilbert space $\mathbb{C}^{2}\otimes\mathbb{C}^{2N}$, instead of $\mathbb{C}^{2N}\otimes\mathbb{C}^{2N}$.

Besides the $m=\log_2 M$ auxiliary qubits in the \textit{first group of auxiliary qubits} and the $n=\log_2 N$ work qubits, we add another auxiliary qubit ($p=1$, in the \textit{second group of auxiliary qubit}). The QWD and QWC  can be constructed explicitly as,
\begin{equation}
V_{i0} = W_{0i} = \sqrt{\frac{\alpha_i}{\alpha}},
\end{equation}
where $\alpha=\sum_{i=0}^{M-1} \alpha_i$ (here we have already had $\alpha_i\ge 0$), such that the QWD map $\ket{0}^{\otimes m} \ket{0} \ket{\phi}$ to $\sum_{i=0}^{M-1} \sqrt{\frac{\alpha_i}{\alpha}} \ket{i} \ket{0} \ket{\phi}$.

For the second group of auxiliary qubit and the work qubits, the quantum state $\ket{0} \ket{\phi}$ in the Hilbert space of $\mathbb{C}^2\otimes\mathbb{C}^{N}$ can be written in a matrix form,
\begin{equation}
\ket{0} \otimes \ket{\phi} = \left(
\begin{array}{c}
1 \\
0
\end{array}
\right) \otimes \boldsymbol{\phi} = \left(
\begin{array}{c}
\boldsymbol{\phi} \\
0
\end{array}
\right).
\end{equation}
Notice that $\phi$ is a state, whereas $\boldsymbol{\phi}$ is a complex vector.

Then the $2N \times 2N$ unitary matrix $L^{2i}$ in Eq. (\ref{t2}) maps $\ket{0} \otimes \ket{\phi}$ to
\begin{eqnarray}	
& & \left(
\begin{array}{cc}
\mathcal{T}_n(H) & -(\mathbb{I}-H^2)^{1/2}\mathcal{U}_{n-1}(H) \\
(\mathbb{I}-H^2)^{1/2}\mathcal{U}_{n-1}(H) & \mathcal{T}_n(H)
\end{array}
\right) \left(
\begin{array}{c}
\boldsymbol{\phi} \\
0
\end{array}
\right) \nonumber \\
&=& \left(
\begin{array}{c}
\mathcal{T}_n(H) \boldsymbol{\phi} \\
(\mathbb{I}-H^2)^{1/2}\mathcal{U}_{n-1}(H) \boldsymbol{\phi}
\end{array}
\right).
\end{eqnarray}


The trial state of work qubits, together with the state of the $M+1$ auxiliary qubits, have been transformed to $M$ subwaves $\sum_{i=0}^{M-1} \sqrt{\frac{\alpha_i}{\alpha}} \ket{i} \ket{0} \ket{\phi}$ by QWD. Repeatedly apply the  $2MN \times 2MN$ controlled-$L^{2i}$ operation $G_i := \ket{i}\bra{i}\otimes (L^{2i}-\mathbb{I}_{1+n})+\mathbb{I}_{m+1+n}$ ($i=0,1,...,M-1$) and the projection measurements $\mathbb{I}_m \otimes \ket{0}\bra{0} \otimes \mathbb{I}_n$ for $M$ times, then the quantum system state becomes
\begin{equation}
\sum_{i=0}^{M-1} \sqrt{\frac{\alpha_i}{\alpha}} \ket{i} \ket{0} \mathcal{T}_{2i} (H)\ket{\phi}.
\end{equation}
Finally, perform the QWC, we get the state
\begin{equation}
\sum_{i=0}^{M-1}\sum_{j=0}^{{M-1}} W_{ji} \sqrt{\frac{\alpha_i}{\alpha}} \ket{j} \ket{0} \mathcal{T}_{2i} (H)\ket{\phi}.
\end{equation}
Use the projection measurement $\ket{0^m}\bra{0^m} \otimes \mathbb{I} \otimes \mathbb{I}_n$ again. If the output is $\ket{0}^{\otimes p}$, then the final state of the entire quantum system will collapse to $\sum_{i=0}^{M-1}W_{0i} \sqrt{\frac{\alpha_i}{\alpha}} \ket{0}^{\otimes m} \ket{0} \mathcal{T}_{2i} (H)\ket{\phi} = \sum_{i=0}^{M-1} \frac{\alpha_i}{\alpha} \ket{0}^{\otimes m} \ket{0} \mathcal{T}_{2i} (H)\ket{\phi}$. The quantum circuit  of this SWP-DQC optimized algorithm is shown in Figure \ref{f3}.

Note that as the precision becomes higher, GTC algorithm requires more number of Chebyshev polynomials $M$, and the mean time required by GTC algorithm becomes larger. The mean time required by our optimized algorithm requires only $O(1/M)$ of that of the GTC algorithm, as discussed in \ref{s4}.

\section{Summary}			
\label{s7}

In this article, we presented a DQC with subwave projections, the SWP-DQC. Explicit quantum circuit of SWP-DQC is constructed.  We proved that the mean time complexity has an $O(M)$ acceleration compared to DQC with only final-wave-projection. We also find the run time depends on the orders of the controlled gates, and this is especially important in the future in constructing  programs for concrete problems.

As an application, we show that the ground state preparation proposed by Ge, Tura, and Cirac is a DQC algorithm. We constructed an optimization of GTC algorithm, and  it not only saves $(2+\log_2 N)$ qubits, but also provides additional acceleration in the expected time. It is also found that  the order of the gate sets $\left\{ L^{2i} \right\}$ is important  to obtain the shortest mean run time of the SWP-DQC algorithm.

\section*{Acknowledgement}
\label{s8}
This work was supported by the National Basic Research Program of China under Grant Nos.~2017YFA0303700 and ~2015CB921001, National Natural Science Foundation of China under Grant Nos.~61726801, ~11474168 and ~11474181, and in part by the Beijing Advanced
Innovation Center for Future Chip (ICFC).

\bibliography{bibfile}

\begin{thebibliography}{45}%
\makeatletter
\providecommand \@ifxundefined [1]{%
 \@ifx{#1\undefined}
}%
\providecommand \@ifnum [1]{%
 \ifnum #1\expandafter \@firstoftwo
 \else \expandafter \@secondoftwo
 \fi
}%
\providecommand \@ifx [1]{%
 \ifx #1\expandafter \@firstoftwo
 \else \expandafter \@secondoftwo
 \fi
}%
\providecommand \natexlab [1]{#1}%
\providecommand \enquote  [1]{``#1''}%
\providecommand \bibnamefont  [1]{#1}%
\providecommand \bibfnamefont [1]{#1}%
\providecommand \citenamefont [1]{#1}%
\providecommand \href@noop [0]{\@secondoftwo}%
\providecommand \href [0]{\begingroup \@sanitize@url \@href}%
\providecommand \@href[1]{\@@startlink{#1}\@@href}%
\providecommand \@@href[1]{\endgroup#1\@@endlink}%
\providecommand \@sanitize@url [0]{\catcode `\\12\catcode `\$12\catcode
  `\&12\catcode `\#12\catcode `\^12\catcode `\_12\catcode `\%12\relax}%
\providecommand \@@startlink[1]{}%
\providecommand \@@endlink[0]{}%
\providecommand \url  [0]{\begingroup\@sanitize@url \@url }%
\providecommand \@url [1]{\endgroup\@href {#1}{\urlprefix }}%
\providecommand \urlprefix  [0]{URL }%
\providecommand \Eprint [0]{\href }%
\providecommand \doibase [0]{http://dx.doi.org/}%
\providecommand \selectlanguage [0]{\@gobble}%
\providecommand \bibinfo  [0]{\@secondoftwo}%
\providecommand \bibfield  [0]{\@secondoftwo}%
\providecommand \translation [1]{[#1]}%
\providecommand \BibitemOpen [0]{}%
\providecommand \bibitemStop [0]{}%
\providecommand \bibitemNoStop [0]{.\EOS\space}%
\providecommand \EOS [0]{\spacefactor3000\relax}%
\providecommand \BibitemShut  [1]{\csname bibitem#1\endcsname}%
\let\auto@bib@innerbib\@empty
\bibitem [{\citenamefont {Long}(2005)}]{long512120general}%
  \BibitemOpen
  \bibfield  {author} {\bibinfo {author} {\bibfnamefont {G.~L.}\ \bibnamefont
  {Long}},\ }\href@noop {} {\bibfield  {journal} {\bibinfo  {journal} {arXiv
  preprint quant-ph/0512120}\ } (\bibinfo {year} {2005})}\BibitemShut {NoStop}%
\bibitem [{\citenamefont {Gui-Lu}(2006)}]{gui2006general}%
  \BibitemOpen
  \bibfield  {author} {\bibinfo {author} {\bibfnamefont {L.}~\bibnamefont
  {Gui-Lu}},\ }\href@noop {} {\bibfield  {journal} {\bibinfo  {journal}
  {Communications in Theoretical Physics}\ }\textbf {\bibinfo {volume} {45}},\
  \bibinfo {pages} {825} (\bibinfo {year} {2006})}\BibitemShut {NoStop}%
\bibitem [{\citenamefont {Gui-Lu}\ and\ \citenamefont
  {Yang}(2008)}]{gui2008duality}%
  \BibitemOpen
  \bibfield  {author} {\bibinfo {author} {\bibfnamefont {L.}~\bibnamefont
  {Gui-Lu}}\ and\ \bibinfo {author} {\bibfnamefont {L.}~\bibnamefont {Yang}},\
  }\href@noop {} {\bibfield  {journal} {\bibinfo  {journal} {Communications in
  Theoretical Physics}\ }\textbf {\bibinfo {volume} {50}},\ \bibinfo {pages}
  {1303} (\bibinfo {year} {2008})}\BibitemShut {NoStop}%
\bibitem [{\citenamefont {Gui-Lu}\ \emph {et~al.}(2009)\citenamefont {Gui-Lu},
  \citenamefont {Yang},\ and\ \citenamefont {Chuan}}]{gui2009allowable}%
  \BibitemOpen
  \bibfield  {author} {\bibinfo {author} {\bibfnamefont {L.}~\bibnamefont
  {Gui-Lu}}, \bibinfo {author} {\bibfnamefont {L.}~\bibnamefont {Yang}}, \ and\
  \bibinfo {author} {\bibfnamefont {W.}~\bibnamefont {Chuan}},\ }\href@noop {}
  {\bibfield  {journal} {\bibinfo  {journal} {Communications in Theoretical
  Physics}\ }\textbf {\bibinfo {volume} {51}},\ \bibinfo {pages} {65} (\bibinfo
  {year} {2009})}\BibitemShut {NoStop}%
\bibitem [{\citenamefont {Gudder}(2007)}]{gudder2007mathematical}%
  \BibitemOpen
  \bibfield  {author} {\bibinfo {author} {\bibfnamefont {S.}~\bibnamefont
  {Gudder}},\ }\href@noop {} {\bibfield  {journal} {\bibinfo  {journal}
  {Quantum Information Processing}\ }\textbf {\bibinfo {volume} {6}},\ \bibinfo
  {pages} {37} (\bibinfo {year} {2007})}\BibitemShut {NoStop}%
\bibitem [{\citenamefont {Long}(2008)}]{long2008mathematical}%
  \BibitemOpen
  \bibfield  {author} {\bibinfo {author} {\bibfnamefont {G.~L.}\ \bibnamefont
  {Long}},\ }\href@noop {} {\bibfield  {journal} {\bibinfo  {journal} {Quantum
  Information Processing}\ }\textbf {\bibinfo {volume} {7}},\ \bibinfo {pages}
  {137} (\bibinfo {year} {2008})}\BibitemShut {NoStop}%
\bibitem [{\citenamefont {Long}(2011)}]{long2011duality}%
  \BibitemOpen
  \bibfield  {author} {\bibinfo {author} {\bibfnamefont {G.~L.}\ \bibnamefont
  {Long}},\ }\href@noop {} {\bibfield  {journal} {\bibinfo  {journal}
  {International Journal of Theoretical Physics}\ }\textbf {\bibinfo {volume}
  {50}},\ \bibinfo {pages} {1305} (\bibinfo {year} {2011})}\BibitemShut
  {NoStop}%
\bibitem [{\citenamefont {Cao}\ \emph {et~al.}(2010)\citenamefont {Cao},
  \citenamefont {Li}, \citenamefont {Chen}, \citenamefont {Zhang},\ and\
  \citenamefont {Guo}}]{cao2010restricted}%
  \BibitemOpen
  \bibfield  {author} {\bibinfo {author} {\bibfnamefont {H.}~\bibnamefont
  {Cao}}, \bibinfo {author} {\bibfnamefont {L.}~\bibnamefont {Li}}, \bibinfo
  {author} {\bibfnamefont {Z.}~\bibnamefont {Chen}}, \bibinfo {author}
  {\bibfnamefont {Y.}~\bibnamefont {Zhang}}, \ and\ \bibinfo {author}
  {\bibfnamefont {Z.}~\bibnamefont {Guo}},\ }\href@noop {} {\bibfield
  {journal} {\bibinfo  {journal} {Chinese Science Bulletin}\ }\textbf {\bibinfo
  {volume} {55}},\ \bibinfo {pages} {2122} (\bibinfo {year}
  {2010})}\BibitemShut {NoStop}%
\bibitem [{\citenamefont {Cao}\ \emph {et~al.}(2013)\citenamefont {Cao},
  \citenamefont {Long}, \citenamefont {Guo},\ and\ \citenamefont
  {Chen}}]{cao2013mathematical}%
  \BibitemOpen
  \bibfield  {author} {\bibinfo {author} {\bibfnamefont {H.-X.}\ \bibnamefont
  {Cao}}, \bibinfo {author} {\bibfnamefont {G.-L.}\ \bibnamefont {Long}},
  \bibinfo {author} {\bibfnamefont {Z.-H.}\ \bibnamefont {Guo}}, \ and\
  \bibinfo {author} {\bibfnamefont {Z.-L.}\ \bibnamefont {Chen}},\ }\href@noop
  {} {\bibfield  {journal} {\bibinfo  {journal} {International Journal of
  Theoretical Physics}\ }\textbf {\bibinfo {volume} {52}},\ \bibinfo {pages}
  {1751} (\bibinfo {year} {2013})}\BibitemShut {NoStop}%
\bibitem [{\citenamefont {Zhang}\ \emph {et~al.}(2010)\citenamefont {Zhang},
  \citenamefont {Cao},\ and\ \citenamefont {Li}}]{zhang2010realization}%
  \BibitemOpen
  \bibfield  {author} {\bibinfo {author} {\bibfnamefont {Y.}~\bibnamefont
  {Zhang}}, \bibinfo {author} {\bibfnamefont {H.}~\bibnamefont {Cao}}, \ and\
  \bibinfo {author} {\bibfnamefont {L.}~\bibnamefont {Li}},\ }\href@noop {}
  {\bibfield  {journal} {\bibinfo  {journal} {Science China Physics, Mechanics
  and Astronomy}\ }\textbf {\bibinfo {volume} {53}},\ \bibinfo {pages} {1878}
  (\bibinfo {year} {2010})}\BibitemShut {NoStop}%
\bibitem [{\citenamefont {Chen}\ \emph {et~al.}(2015)\citenamefont {Chen},
  \citenamefont {Cao},\ and\ \citenamefont {Meng}}]{chen2015generalized}%
  \BibitemOpen
  \bibfield  {author} {\bibinfo {author} {\bibfnamefont {L.}~\bibnamefont
  {Chen}}, \bibinfo {author} {\bibfnamefont {H.-X.}\ \bibnamefont {Cao}}, \
  and\ \bibinfo {author} {\bibfnamefont {H.-X.}\ \bibnamefont {Meng}},\
  }\href@noop {} {\bibfield  {journal} {\bibinfo  {journal} {Quantum
  Information Processing}\ }\textbf {\bibinfo {volume} {14}},\ \bibinfo {pages}
  {4351} (\bibinfo {year} {2015})}\BibitemShut {NoStop}%
\bibitem [{\citenamefont {Chen}\ and\ \citenamefont
  {Cao}(2009)}]{chen2009note}%
  \BibitemOpen
  \bibfield  {author} {\bibinfo {author} {\bibfnamefont {Z.-L.}\ \bibnamefont
  {Chen}}\ and\ \bibinfo {author} {\bibfnamefont {H.-X.}\ \bibnamefont {Cao}},\
  }\href@noop {} {\bibfield  {journal} {\bibinfo  {journal} {International
  Journal of Theoretical Physics}\ }\textbf {\bibinfo {volume} {48}},\ \bibinfo
  {pages} {1669} (\bibinfo {year} {2009})}\BibitemShut {NoStop}%
\bibitem [{\citenamefont {Zou}\ \emph {et~al.}(2009)\citenamefont {Zou},
  \citenamefont {Qiu}, \citenamefont {Wu}, \citenamefont {Li},\ and\
  \citenamefont {Li}}]{zou2009mathematical}%
  \BibitemOpen
  \bibfield  {author} {\bibinfo {author} {\bibfnamefont {X.}~\bibnamefont
  {Zou}}, \bibinfo {author} {\bibfnamefont {D.}~\bibnamefont {Qiu}}, \bibinfo
  {author} {\bibfnamefont {L.}~\bibnamefont {Wu}}, \bibinfo {author}
  {\bibfnamefont {L.}~\bibnamefont {Li}}, \ and\ \bibinfo {author}
  {\bibfnamefont {L.}~\bibnamefont {Li}},\ }\href@noop {} {\bibfield  {journal}
  {\bibinfo  {journal} {Quantum Information Processing}\ }\textbf {\bibinfo
  {volume} {8}},\ \bibinfo {pages} {37} (\bibinfo {year} {2009})}\BibitemShut
  {NoStop}%
\bibitem [{\citenamefont {Wu}\ \emph {et~al.}(2014)\citenamefont {Wu},
  \citenamefont {Zhang},\ and\ \citenamefont {Zhu}}]{wu2014remarks}%
  \BibitemOpen
  \bibfield  {author} {\bibinfo {author} {\bibfnamefont {Z.}~\bibnamefont
  {Wu}}, \bibinfo {author} {\bibfnamefont {S.}~\bibnamefont {Zhang}}, \ and\
  \bibinfo {author} {\bibfnamefont {C.}~\bibnamefont {Zhu}},\ }\href@noop {}
  {\bibfield  {journal} {\bibinfo  {journal} {Hacettepe J Math Stat}\ }\textbf
  {\bibinfo {volume} {43}},\ \bibinfo {pages} {451} (\bibinfo {year}
  {2014})}\BibitemShut {NoStop}%
\bibitem [{\citenamefont {Childs}\ and\ \citenamefont
  {Wiebe}(2012)}]{childs2012hamiltonian}%
  \BibitemOpen
  \bibfield  {author} {\bibinfo {author} {\bibfnamefont {A.~M.}\ \bibnamefont
  {Childs}}\ and\ \bibinfo {author} {\bibfnamefont {N.}~\bibnamefont {Wiebe}},\
  }\href@noop {} {\bibfield  {journal} {\bibinfo  {journal} {arXiv preprint
  arXiv:1202.5822}\ } (\bibinfo {year} {2012})}\BibitemShut {NoStop}%
\bibitem [{\citenamefont {Shor}(1994)}]{shor1994algorithms}%
  \BibitemOpen
  \bibfield  {author} {\bibinfo {author} {\bibfnamefont {P.~W.}\ \bibnamefont
  {Shor}},\ }in\ \href@noop {} {\emph {\bibinfo {booktitle} {Foundations of
  Computer Science, 1994 Proceedings., 35th Annual Symposium on}}}\ (\bibinfo
  {organization} {Ieee},\ \bibinfo {year} {1994})\ pp.\ \bibinfo {pages}
  {124--134}\BibitemShut {NoStop}%
\bibitem [{\citenamefont {Grover}(1996)}]{grover1996fast}%
  \BibitemOpen
  \bibfield  {author} {\bibinfo {author} {\bibfnamefont {L.~K.}\ \bibnamefont
  {Grover}},\ }in\ \href@noop {} {\emph {\bibinfo {booktitle} {Proceedings of
  the twenty-eighth annual ACM symposium on Theory of computing}}}\ (\bibinfo
  {organization} {ACM},\ \bibinfo {year} {1996})\ pp.\ \bibinfo {pages}
  {212--219}\BibitemShut {NoStop}%
\bibitem [{\citenamefont {Long}(2001)}]{long2001grover}%
  \BibitemOpen
  \bibfield  {author} {\bibinfo {author} {\bibfnamefont {G.-L.}\ \bibnamefont
  {Long}},\ }\href@noop {} {\bibfield  {journal} {\bibinfo  {journal} {Physical
  Review A}\ }\textbf {\bibinfo {volume} {64}},\ \bibinfo {pages} {022307}
  (\bibinfo {year} {2001})}\BibitemShut {NoStop}%
\bibitem [{\citenamefont {Wan-Ying}\ \emph {et~al.}(2007)\citenamefont
  {Wan-Ying}, \citenamefont {Bin}, \citenamefont {Chuan},\ and\ \citenamefont
  {Gui-Lu}}]{wan2007prime}%
  \BibitemOpen
  \bibfield  {author} {\bibinfo {author} {\bibfnamefont {W.}~\bibnamefont
  {Wan-Ying}}, \bibinfo {author} {\bibfnamefont {S.}~\bibnamefont {Bin}},
  \bibinfo {author} {\bibfnamefont {W.}~\bibnamefont {Chuan}}, \ and\ \bibinfo
  {author} {\bibfnamefont {L.}~\bibnamefont {Gui-Lu}},\ }\href@noop {}
  {\bibfield  {journal} {\bibinfo  {journal} {Communications in Theoretical
  Physics}\ }\textbf {\bibinfo {volume} {47}},\ \bibinfo {pages} {471}
  (\bibinfo {year} {2007})}\BibitemShut {NoStop}%
\bibitem [{\citenamefont {Harrow}\ \emph {et~al.}(2009)\citenamefont {Harrow},
  \citenamefont {Hassidim},\ and\ \citenamefont {Lloyd}}]{harrow2009quantum}%
  \BibitemOpen
  \bibfield  {author} {\bibinfo {author} {\bibfnamefont {A.~W.}\ \bibnamefont
  {Harrow}}, \bibinfo {author} {\bibfnamefont {A.}~\bibnamefont {Hassidim}}, \
  and\ \bibinfo {author} {\bibfnamefont {S.}~\bibnamefont {Lloyd}},\
  }\href@noop {} {\bibfield  {journal} {\bibinfo  {journal} {Physical Review
  Letters}\ }\textbf {\bibinfo {volume} {103}},\ \bibinfo {pages} {150502}
  (\bibinfo {year} {2009})}\BibitemShut {NoStop}%
\bibitem [{\citenamefont {Hao}\ \emph {et~al.}(2010)\citenamefont {Hao},
  \citenamefont {Liu},\ and\ \citenamefont {Long}}]{hao2010n}%
  \BibitemOpen
  \bibfield  {author} {\bibinfo {author} {\bibfnamefont {L.}~\bibnamefont
  {Hao}}, \bibinfo {author} {\bibfnamefont {D.}~\bibnamefont {Liu}}, \ and\
  \bibinfo {author} {\bibfnamefont {G.}~\bibnamefont {Long}},\ }\href@noop {}
  {\bibfield  {journal} {\bibinfo  {journal} {Science China Physics, Mechanics
  and Astronomy}\ }\textbf {\bibinfo {volume} {53}},\ \bibinfo {pages} {1765}
  (\bibinfo {year} {2010})}\BibitemShut {NoStop}%
\bibitem [{\citenamefont {Berry}\ \emph {et~al.}(2015)\citenamefont {Berry},
  \citenamefont {Childs}, \citenamefont {Cleve}, \citenamefont {Kothari},\ and\
  \citenamefont {Somma}}]{berry2015simulating}%
  \BibitemOpen
  \bibfield  {author} {\bibinfo {author} {\bibfnamefont {D.~W.}\ \bibnamefont
  {Berry}}, \bibinfo {author} {\bibfnamefont {A.~M.}\ \bibnamefont {Childs}},
  \bibinfo {author} {\bibfnamefont {R.}~\bibnamefont {Cleve}}, \bibinfo
  {author} {\bibfnamefont {R.}~\bibnamefont {Kothari}}, \ and\ \bibinfo
  {author} {\bibfnamefont {R.~D.}\ \bibnamefont {Somma}},\ }\href@noop {}
  {\bibfield  {journal} {\bibinfo  {journal} {Physical review letters}\
  }\textbf {\bibinfo {volume} {114}},\ \bibinfo {pages} {090502} (\bibinfo
  {year} {2015})}\BibitemShut {NoStop}%
\bibitem [{\citenamefont {Wei}\ and\ \citenamefont
  {Long}(2016)}]{wei2016duality1}%
  \BibitemOpen
  \bibfield  {author} {\bibinfo {author} {\bibfnamefont {S.-J.}\ \bibnamefont
  {Wei}}\ and\ \bibinfo {author} {\bibfnamefont {G.-L.}\ \bibnamefont {Long}},\
  }\href@noop {} {\bibfield  {journal} {\bibinfo  {journal} {Quantum
  Information Processing}\ }\textbf {\bibinfo {volume} {15}},\ \bibinfo {pages}
  {1189} (\bibinfo {year} {2016})}\BibitemShut {NoStop}%
\bibitem [{\citenamefont {Wei}\ \emph {et~al.}(2016)\citenamefont {Wei},
  \citenamefont {Ruan},\ and\ \citenamefont {Long}}]{wei2016duality2}%
  \BibitemOpen
  \bibfield  {author} {\bibinfo {author} {\bibfnamefont {S.-J.}\ \bibnamefont
  {Wei}}, \bibinfo {author} {\bibfnamefont {D.}~\bibnamefont {Ruan}}, \ and\
  \bibinfo {author} {\bibfnamefont {G.-L.}\ \bibnamefont {Long}},\ }\href@noop
  {} {\bibfield  {journal} {\bibinfo  {journal} {Scientific Reports}\ }\textbf
  {\bibinfo {volume} {6}},\ \bibinfo {pages} {30727} (\bibinfo {year}
  {2016})}\BibitemShut {NoStop}%
\bibitem [{\citenamefont {Wei}\ \emph {et~al.}(2017)\citenamefont {Wei},
  \citenamefont {Zhou}, \citenamefont {Ruan},\ and\ \citenamefont
  {Long}}]{wei2017realization}%
  \BibitemOpen
  \bibfield  {author} {\bibinfo {author} {\bibfnamefont {S.-J.}\ \bibnamefont
  {Wei}}, \bibinfo {author} {\bibfnamefont {Z.-R.}\ \bibnamefont {Zhou}},
  \bibinfo {author} {\bibfnamefont {D.}~\bibnamefont {Ruan}}, \ and\ \bibinfo
  {author} {\bibfnamefont {G.-L.}\ \bibnamefont {Long}},\ }\href@noop {}
  {\bibfield  {journal} {\bibinfo  {journal} {Vehicular Technology Conference
  (VTC Spring), 2017 IEEE 85th}\ ,\ \bibinfo {pages} {1}} (\bibinfo {year}
  {2017})}\BibitemShut {NoStop}%
\bibitem [{\citenamefont {Qiang}\ \emph {et~al.}(2017)\citenamefont {Qiang},
  \citenamefont {Zhou}, \citenamefont {Aungskunsiri}, \citenamefont {Cable},\
  and\ \citenamefont {O'Brien}}]{qiang2017quantum}%
  \BibitemOpen
  \bibfield  {author} {\bibinfo {author} {\bibfnamefont {X.}~\bibnamefont
  {Qiang}}, \bibinfo {author} {\bibfnamefont {X.}~\bibnamefont {Zhou}},
  \bibinfo {author} {\bibfnamefont {K.}~\bibnamefont {Aungskunsiri}}, \bibinfo
  {author} {\bibfnamefont {H.}~\bibnamefont {Cable}}, \ and\ \bibinfo {author}
  {\bibfnamefont {J.~L.}\ \bibnamefont {O'Brien}},\ }\href@noop {} {\bibfield
  {journal} {\bibinfo  {journal} {Quantum Science and Technology}\ }\textbf
  {\bibinfo {volume} {2}},\ \bibinfo {pages} {045002} (\bibinfo {year}
  {2017})}\BibitemShut {NoStop}%
\bibitem [{\citenamefont {Qiang}\ \emph {et~al.}(2018)\citenamefont {Qiang},
  \citenamefont {Zhou}, \citenamefont {Wang}, \citenamefont {Wilkes},
  \citenamefont {Loke}, \citenamefont {O'Gara}, \citenamefont {Kling},
  \citenamefont {Marshall}, \citenamefont {Santagati}, \citenamefont {Ralph}
  \emph {et~al.}}]{qiang2018large}%
  \BibitemOpen
  \bibfield  {author} {\bibinfo {author} {\bibfnamefont {X.}~\bibnamefont
  {Qiang}}, \bibinfo {author} {\bibfnamefont {X.}~\bibnamefont {Zhou}},
  \bibinfo {author} {\bibfnamefont {J.}~\bibnamefont {Wang}}, \bibinfo {author}
  {\bibfnamefont {C.~M.}\ \bibnamefont {Wilkes}}, \bibinfo {author}
  {\bibfnamefont {T.}~\bibnamefont {Loke}}, \bibinfo {author} {\bibfnamefont
  {S.}~\bibnamefont {O'Gara}}, \bibinfo {author} {\bibfnamefont
  {L.}~\bibnamefont {Kling}}, \bibinfo {author} {\bibfnamefont {G.~D.}\
  \bibnamefont {Marshall}}, \bibinfo {author} {\bibfnamefont {R.}~\bibnamefont
  {Santagati}}, \bibinfo {author} {\bibfnamefont {T.~C.}\ \bibnamefont
  {Ralph}},  \emph {et~al.},\ }\href@noop {} {\bibfield  {journal} {\bibinfo
  {journal} {Nature Photonics}\ }\textbf {\bibinfo {volume} {12}},\ \bibinfo
  {pages} {534} (\bibinfo {year} {2018})}\BibitemShut {NoStop}%
\bibitem [{\citenamefont {Marshman}\ \emph {et~al.}(2018)\citenamefont
  {Marshman}, \citenamefont {Lund}, \citenamefont {Rohde},\ and\ \citenamefont
  {Ralph}}]{marshman2018passive}%
  \BibitemOpen
  \bibfield  {author} {\bibinfo {author} {\bibfnamefont {R.~J.}\ \bibnamefont
  {Marshman}}, \bibinfo {author} {\bibfnamefont {A.~P.}\ \bibnamefont {Lund}},
  \bibinfo {author} {\bibfnamefont {P.~P.}\ \bibnamefont {Rohde}}, \ and\
  \bibinfo {author} {\bibfnamefont {T.~C.}\ \bibnamefont {Ralph}},\ }\href@noop
  {} {\bibfield  {journal} {\bibinfo  {journal} {Physical Review A}\ }\textbf
  {\bibinfo {volume} {97}},\ \bibinfo {pages} {022324} (\bibinfo {year}
  {2018})}\BibitemShut {NoStop}%
\bibitem [{\citenamefont {Wei}\ \emph {et~al.}(2018)\citenamefont {Wei},
  \citenamefont {Xin},\ and\ \citenamefont {Long}}]{wei2018efficient}%
  \BibitemOpen
  \bibfield  {author} {\bibinfo {author} {\bibfnamefont {S.-J.}\ \bibnamefont
  {Wei}}, \bibinfo {author} {\bibfnamefont {T.}~\bibnamefont {Xin}}, \ and\
  \bibinfo {author} {\bibfnamefont {G.-L.}\ \bibnamefont {Long}},\ }\href@noop
  {} {\bibfield  {journal} {\bibinfo  {journal} {SCIENCE CHINA Physics,
  Mechanics \& Astronomy}\ }\textbf {\bibinfo {volume} {61}},\ \bibinfo {pages}
  {070311} (\bibinfo {year} {2018})}\BibitemShut {NoStop}%
\bibitem [{\citenamefont {Long}\ \emph {et~al.}(2018)\citenamefont {Long},
  \citenamefont {Qin}, \citenamefont {Yang},\ and\ \citenamefont
  {Li}}]{long2018realistic}%
  \BibitemOpen
  \bibfield  {author} {\bibinfo {author} {\bibfnamefont {G.}~\bibnamefont
  {Long}}, \bibinfo {author} {\bibfnamefont {W.}~\bibnamefont {Qin}}, \bibinfo
  {author} {\bibfnamefont {Z.}~\bibnamefont {Yang}}, \ and\ \bibinfo {author}
  {\bibfnamefont {J.-L.}\ \bibnamefont {Li}},\ }\href@noop {} {\bibfield
  {journal} {\bibinfo  {journal} {Science China Physics, Mechanics \&amp;
  Astronomy}\ }\textbf {\bibinfo {volume} {61}},\ \bibinfo {pages} {030311}
  (\bibinfo {year} {2018})}\BibitemShut {NoStop}%
\bibitem [{\citenamefont {Roy}\ \emph {et~al.}(2012)\citenamefont {Roy},
  \citenamefont {Shukla},\ and\ \citenamefont {Mahesh}}]{roy2012nmr}%
  \BibitemOpen
  \bibfield  {author} {\bibinfo {author} {\bibfnamefont {S.~S.}\ \bibnamefont
  {Roy}}, \bibinfo {author} {\bibfnamefont {A.}~\bibnamefont {Shukla}}, \ and\
  \bibinfo {author} {\bibfnamefont {T.}~\bibnamefont {Mahesh}},\ }\href@noop {}
  {\bibfield  {journal} {\bibinfo  {journal} {Physical Review A}\ }\textbf
  {\bibinfo {volume} {85}},\ \bibinfo {pages} {022109} (\bibinfo {year}
  {2012})}\BibitemShut {NoStop}%
\bibitem [{\citenamefont {Xin}\ \emph {et~al.}(2015)\citenamefont {Xin},
  \citenamefont {Li}, \citenamefont {Wang},\ and\ \citenamefont
  {Long}}]{xin2015realization}%
  \BibitemOpen
  \bibfield  {author} {\bibinfo {author} {\bibfnamefont {T.}~\bibnamefont
  {Xin}}, \bibinfo {author} {\bibfnamefont {H.}~\bibnamefont {Li}}, \bibinfo
  {author} {\bibfnamefont {B.-X.}\ \bibnamefont {Wang}}, \ and\ \bibinfo
  {author} {\bibfnamefont {G.-L.}\ \bibnamefont {Long}},\ }\href@noop {}
  {\bibfield  {journal} {\bibinfo  {journal} {Physical Review A}\ }\textbf
  {\bibinfo {volume} {92}},\ \bibinfo {pages} {022126} (\bibinfo {year}
  {2015})}\BibitemShut {NoStop}%
\bibitem [{\citenamefont {Zhou}\ \emph {et~al.}(2017)\citenamefont {Zhou},
  \citenamefont {Zhu}, \citenamefont {Liu}, \citenamefont {Li}, \citenamefont
  {Shi}, \citenamefont {Ding}, \citenamefont {Chen}, \citenamefont {Gao},
  \citenamefont {Guo},\ and\ \citenamefont {Shi}}]{zhou2017quantum}%
  \BibitemOpen
  \bibfield  {author} {\bibinfo {author} {\bibfnamefont {Z.-Y.}\ \bibnamefont
  {Zhou}}, \bibinfo {author} {\bibfnamefont {Z.-H.}\ \bibnamefont {Zhu}},
  \bibinfo {author} {\bibfnamefont {S.-L.}\ \bibnamefont {Liu}}, \bibinfo
  {author} {\bibfnamefont {Y.-H.}\ \bibnamefont {Li}}, \bibinfo {author}
  {\bibfnamefont {S.}~\bibnamefont {Shi}}, \bibinfo {author} {\bibfnamefont
  {D.-S.}\ \bibnamefont {Ding}}, \bibinfo {author} {\bibfnamefont {L.-X.}\
  \bibnamefont {Chen}}, \bibinfo {author} {\bibfnamefont {W.}~\bibnamefont
  {Gao}}, \bibinfo {author} {\bibfnamefont {G.-C.}\ \bibnamefont {Guo}}, \ and\
  \bibinfo {author} {\bibfnamefont {B.-S.}\ \bibnamefont {Shi}},\ }\href@noop
  {} {\bibfield  {journal} {\bibinfo  {journal} {Science Bulletin}\ }\textbf
  {\bibinfo {volume} {62}},\ \bibinfo {pages} {1185} (\bibinfo {year}
  {2017})}\BibitemShut {NoStop}%
\bibitem [{\citenamefont {Qin}\ \emph {et~al.}(2018)\citenamefont {Qin},
  \citenamefont {Miranowicz}, \citenamefont {Long}, \citenamefont {You},\ and\
  \citenamefont {Nori}}]{qin2018proposal}%
  \BibitemOpen
  \bibfield  {author} {\bibinfo {author} {\bibfnamefont {W.}~\bibnamefont
  {Qin}}, \bibinfo {author} {\bibfnamefont {A.}~\bibnamefont {Miranowicz}},
  \bibinfo {author} {\bibfnamefont {G.}~\bibnamefont {Long}}, \bibinfo {author}
  {\bibfnamefont {J.}~\bibnamefont {You}}, \ and\ \bibinfo {author}
  {\bibfnamefont {F.}~\bibnamefont {Nori}},\ }\href@noop {} {\bibfield
  {journal} {\bibinfo  {journal} {arXiv preprint arXiv:1807.03194}\ } (\bibinfo
  {year} {2018})}\BibitemShut {NoStop}%
\bibitem [{\citenamefont {Zhu}(2018)}]{zhu2018beyond}%
  \BibitemOpen
  \bibfield  {author} {\bibinfo {author} {\bibfnamefont {Z.-H.}\ \bibnamefont
  {Zhu}},\ }\href@noop {} {\bibfield  {journal} {\bibinfo  {journal} {Science
  China Physics, Mechanics \&amp; Astronomy}\ }\textbf {\bibinfo {volume}
  {61}},\ \bibinfo {pages} {050331} (\bibinfo {year} {2018})}\BibitemShut
  {NoStop}%
\bibitem [{\citenamefont {Zheng}\ \emph {et~al.}(2013)\citenamefont {Zheng},
  \citenamefont {Li}, \citenamefont {Song},\ and\ \citenamefont
  {Long}}]{zheng2013direct}%
  \BibitemOpen
  \bibfield  {author} {\bibinfo {author} {\bibfnamefont {C.}~\bibnamefont
  {Zheng}}, \bibinfo {author} {\bibfnamefont {J.-L.}\ \bibnamefont {Li}},
  \bibinfo {author} {\bibfnamefont {S.-Y.}\ \bibnamefont {Song}}, \ and\
  \bibinfo {author} {\bibfnamefont {G.~L.}\ \bibnamefont {Long}},\ }\href@noop
  {} {\bibfield  {journal} {\bibinfo  {journal} {JOSA B}\ }\textbf {\bibinfo
  {volume} {30}},\ \bibinfo {pages} {1688} (\bibinfo {year}
  {2013})}\BibitemShut {NoStop}%
\bibitem [{\citenamefont {Zheng}\ and\ \citenamefont
  {Wei}(2018)}]{zheng2018duality1}%
  \BibitemOpen
  \bibfield  {author} {\bibinfo {author} {\bibfnamefont {C.}~\bibnamefont
  {Zheng}}\ and\ \bibinfo {author} {\bibfnamefont {S.}~\bibnamefont {Wei}},\
  }\href@noop {} {\bibfield  {journal} {\bibinfo  {journal} {International
  Journal of Theoretical Physics}\ ,\ \bibinfo {pages} {1}} (\bibinfo {year}
  {2018})}\BibitemShut {NoStop}%
\bibitem [{\citenamefont {Zheng}(2018)}]{zheng2018duality2}%
  \BibitemOpen
  \bibfield  {author} {\bibinfo {author} {\bibfnamefont {C.}~\bibnamefont
  {Zheng}},\ }\href@noop {} {\bibfield  {journal} {\bibinfo  {journal} {EPL
  (Europhysics Letters)}\ }\textbf {\bibinfo {volume} {123}},\ \bibinfo {pages}
  {40002} (\bibinfo {year} {2018})}\BibitemShut {NoStop}%
\bibitem [{\citenamefont {Huang}\ \emph {et~al.}(2018)\citenamefont {Huang},
  \citenamefont {Wu},\ and\ \citenamefont {Zhou}}]{huang2018pryce}%
  \BibitemOpen
  \bibfield  {author} {\bibinfo {author} {\bibfnamefont {L.}~\bibnamefont
  {Huang}}, \bibinfo {author} {\bibfnamefont {X.}~\bibnamefont {Wu}}, \ and\
  \bibinfo {author} {\bibfnamefont {T.}~\bibnamefont {Zhou}},\ }\href@noop {}
  {\bibfield  {journal} {\bibinfo  {journal} {Science China Physics, Mechanics
  \&amp; Astronomy}\ }\textbf {\bibinfo {volume} {61}},\ \bibinfo {pages}
  {080311} (\bibinfo {year} {2018})}\BibitemShut {NoStop}%
\bibitem [{\citenamefont {Liu}\ and\ \citenamefont
  {Cui}(2014)}]{liu2014realization}%
  \BibitemOpen
  \bibfield  {author} {\bibinfo {author} {\bibfnamefont {Y.}~\bibnamefont
  {Liu}}\ and\ \bibinfo {author} {\bibfnamefont {J.-X.}\ \bibnamefont {Cui}},\
  }\href@noop {} {\bibfield  {journal} {\bibinfo  {journal} {Chinese Science
  Bulletin}\ }\textbf {\bibinfo {volume} {59}},\ \bibinfo {pages} {2298}
  (\bibinfo {year} {2014})}\BibitemShut {NoStop}%
\bibitem [{\citenamefont {Cui}\ \emph {et~al.}(2012)\citenamefont {Cui},
  \citenamefont {Zhou},\ and\ \citenamefont {Long}}]{cui2012density}%
  \BibitemOpen
  \bibfield  {author} {\bibinfo {author} {\bibfnamefont {J.}~\bibnamefont
  {Cui}}, \bibinfo {author} {\bibfnamefont {T.}~\bibnamefont {Zhou}}, \ and\
  \bibinfo {author} {\bibfnamefont {G.~L.}\ \bibnamefont {Long}},\ }\href@noop
  {} {\bibfield  {journal} {\bibinfo  {journal} {Quantum Information
  Processing}\ }\textbf {\bibinfo {volume} {11}},\ \bibinfo {pages} {317}
  (\bibinfo {year} {2012})}\BibitemShut {NoStop}%
\bibitem [{\citenamefont {Nielsen}\ and\ \citenamefont
  {Chuang}(2002)}]{nielsen2002quantum}%
  \BibitemOpen
  \bibfield  {author} {\bibinfo {author} {\bibfnamefont {M.~A.}\ \bibnamefont
  {Nielsen}}\ and\ \bibinfo {author} {\bibfnamefont {I.}~\bibnamefont
  {Chuang}},\ }\href@noop {} {\enquote {\bibinfo {title} {Quantum computation
  and quantum information},}\ } (\bibinfo {year} {2002})\BibitemShut {NoStop}%
\bibitem [{\citenamefont {Ge}\ \emph {et~al.}(2017)\citenamefont {Ge},
  \citenamefont {Tura},\ and\ \citenamefont {Cirac}}]{ge2017faster}%
  \BibitemOpen
  \bibfield  {author} {\bibinfo {author} {\bibfnamefont {Y.}~\bibnamefont
  {Ge}}, \bibinfo {author} {\bibfnamefont {J.}~\bibnamefont {Tura}}, \ and\
  \bibinfo {author} {\bibfnamefont {J.~I.}\ \bibnamefont {Cirac}},\ }\href@noop
  {} {\bibfield  {journal} {\bibinfo  {journal} {arXiv preprint
  arXiv:1712.03193}\ } (\bibinfo {year} {2017})}\BibitemShut {NoStop}%
\bibitem [{\citenamefont {Wu}(1994)}]{wu1994additive}%
  \BibitemOpen
  \bibfield  {author} {\bibinfo {author} {\bibfnamefont {P.}~\bibnamefont
  {Wu}},\ }\href@noop {} {\bibfield  {journal} {\bibinfo  {journal} {Banach
  Center Publications}\ }\textbf {\bibinfo {volume} {30}},\ \bibinfo {pages}
  {337} (\bibinfo {year} {1994})}\BibitemShut {NoStop}%
\bibitem [{\citenamefont {Wang}\ \emph {et~al.}(2008)\citenamefont {Wang},
  \citenamefont {Du},\ and\ \citenamefont {Dou}}]{wang2008note}%
  \BibitemOpen
  \bibfield  {author} {\bibinfo {author} {\bibfnamefont {Y.-Q.}\ \bibnamefont
  {Wang}}, \bibinfo {author} {\bibfnamefont {H.-K.}\ \bibnamefont {Du}}, \ and\
  \bibinfo {author} {\bibfnamefont {Y.-N.}\ \bibnamefont {Dou}},\ }\href@noop
  {} {\bibfield  {journal} {\bibinfo  {journal} {International Journal of
  Theoretical Physics}\ }\textbf {\bibinfo {volume} {47}},\ \bibinfo {pages}
  {2268} (\bibinfo {year} {2008})}\BibitemShut {NoStop}%
\end{thebibliography}%

\end{document}